\documentclass[review]{elsarticle}

\usepackage{lineno,hyperref}
\modulolinenumbers[5]

\journal{Journal of Discrete Algorithms}









\bibliographystyle{elsarticle-num}

\usepackage[utf8]{inputenc}
\setcounter{tocdepth}{3}
\usepackage{xcolor,colortbl}
\usepackage{xspace}
\usepackage{multirow}
\usepackage{arydshln}
\usepackage{hhline}
\usepackage{url}
\usepackage{listings}
\usepackage{wasysym}
\usepackage{amssymb}

\newtheorem{thm}{Theorem}
\newtheorem{lem}{Lemma}
\newdefinition{rmk}{Remark}
\newproof{proof}{Proof}

\newcommand{\qedwhite}{\hfill \ensuremath{\Box}}

\lstset{
language=C,
basicstyle=\small\ttfamily,
numbers=left,
xleftmargin=10pt,
frame=tb,
framexleftmargin=14pt,
lineskip={-0.5pt}
}

\newcommand{\sigmas}{\ensuremath{\Sigma^+}\xspace}
\newcommand{\sigmad}{\ensuremath{\Sigma^{\$}}\xspace}
\newcommand{\lcp}{\ensuremath{lcp}\xspace}

\newcommand{\SA}{\ensuremath{\mathsf{SA}}\xspace}
\newcommand{\LCP}{\ensuremath{\mathsf{LCP}}\xspace}
\newcommand{\PLCP}{\ensuremath{\mathsf{PLCP}}\xspace}
\newcommand{\LCPs}{\ensuremath{\LCP\kern-.10em(T_s)}\xspace}
\newcommand{\LCPsn}{\ensuremath{\LCP\kern-.10em{(T_{s+1})}}\xspace}

\newcommand{\LCPt}{\ensuremath{\mathsf{LCP}_2}\xspace}
\newcommand{\LCPts}{\ensuremath{\LCPt\kern-.10em(T_s)}\xspace}
\newcommand{\LCPtsn}{\ensuremath{\LCPt\kern-.10em{(T_{s+1})}}\xspace}

\newcommand{\RMQ}{\ensuremath{rmq}\xspace}
\newcommand{\BWT}{\ensuremath{\mathsf{BWT}}\xspace}

\newcommand{\etal}{{\it et al.}\xspace}
\newcommand{\ie}{{\it i.e.}\xspace}

\sloppy

\begin{document}

\begin{frontmatter}

\title{Burrows-Wheeler transform and {LCP} array construction in constant space\tnoteref{mytitlenote}}

\tnotetext[mytitlenote]{A preliminary version of this work appeared in IWOCA 2015~\cite{Louza2015}.
}




\author[unicamp]{Felipe A. Louza\corref{mycorrespondingauthors}}
\cortext[mycorrespondingauthors]{Corresponding author}
\ead{louza@ic.unicamp.br}
\author[diego-portales]{Travis Gagie}
\ead{travis.gagie@mail.udp.cl}
\author[unicamp]{Guilherme P. Telles}
\ead{gpt@ic.unicamp.br}

\address[unicamp]{Institute of Computing, University of Campinas, Campinas, SP, Brazil}
\address[diego-portales]{School of Informatics and Telecommunications, Diego Portales University, Santiago, Chile\\
CeBiB (Center of Biotechnology and Bioengineering), Santiago, Chile}

\sloppy
\begin{abstract}
In this article we extend the elegant in-place Burrows-Wheeler transform (BWT)
algorithm proposed by Crochemore et al.~(Crochemore et al., 2015).
Our extension is twofold: 
we first show how to compute simultaneously the longest common prefix (LCP)
array as well as the BWT, using constant additional space; 
we then show how to build the LCP array directly in compressed representation  
using Elias coding, still using constant additional space and with no asymptotic slowdown.
Furthermore, we provide a time/space tradeoff for our algorithm when additional memory is
allowed.
Our algorithm runs in quadratic time, as does
Crochemore et al.'s, and is supported by interesting properties of the BWT and
of the LCP array, contributing to our understanding of the time/space tradeoff
curve for building indexing structures. 
\end{abstract}

\begin{keyword}
Burrows-Wheeler transform, LCP array, In-place algorithms, Compressed
{LCP} array, Elias coding 
\end{keyword}

\end{frontmatter}


\section{Introduction}

There have been many
articles~\cite{Okanohara2009,Tischler2014,Kasai2001,Karkkainen2009,Beller2013b}
about building the Burrows-Wheeler transform (\BWT)~\cite{Burrows1994} and  the
longest common prefix (\LCP) array.  
For example, Belazzougui~\cite{Belazzougui2014b} showed how we can compute the \BWT and
the (permuted) \LCP array of a string $T$ of length $n$ over an alphabet of size
$\sigma$ in linear time and $O (n\log\sigma)$ bits of space (see also~\cite{Belazzougui2016,Munro2017}).
Navarro and Nekrich~\cite{Navarro2014b} and Policriti, Gigante and
Prezza~\cite{Policriti2015} showed how to build the \BWT in compressed space
and, respectively, $O (n \log n / \log \log n)$ worst-case time and
average-case time proportional to the length of the compressed representation of $T$.

The most space-efficient \BWT construction algorithm
currently known, however, is due to Crochemore \etal~\cite{Crochemore2015}: it
builds the \BWT in place --- \ie, replacing the input string with the \BWT ---
in $O (n^2)$ time for unbounded alphabets
using only a constant number of $\Omega(\log n)$ bit words of additional
memory (\ie, four integer variables and one character variable). 
Unlike most \BWT-construction algorithms, this one is symmetric to the \BWT inversion.
Its simplicity and elegance make it very attractive from a theoretical point of
view and it is interesting as one extreme of the time/space tradeoff curve for
building BWTs.  
Because a quadratic time bound is impractical, however, Crochemore \etal
showed how their algorithm can be speeded up at the cost of using more space.

Closely related to the \BWT, the suffix array
(\SA)~\cite{Manber1993,Gonnet1992} may be constructed by many algorithms in 
linear time 
(see~\cite{Puglisi2007,Dhaliwal2012,Karkkainen2016b} for reviews).
Franceschini and Muthukrishnan~\cite{Franceschini2007} presented a 
suffix array construction algorithm that runs in $O(n\log n)$ time using
constant additional space.
The \LCP array can be computed in linear time together with \SA during
the suffix sorting~\cite{Fischer2011,Louza2017} or independently given
$T$ and \SA as input~\cite{Kasai2001,Manzini2004b,Karkkainen2009} or
given the \BWT~\cite{Gog2011b,Beller2013b}.
Table~\ref{t:related} summarizes the most closely related algorithms'
bounds.\footnote{Although the authors did not mention it, it seems likely
Navarro and Nekrich's and Policriti \etal's algorithms can also be made to
reuse the space occupied by the text for the \BWT.  With that modification,
their \(n H_k (T) + o (n \log \sigma)\) space bounds in Table~\ref{t:related}
can be made \(o (n \log \sigma)\).}

\begin{table}[t]
\centering
\caption{
Summary of related works and their theoretical bounds.
The last column shows the additional space used on top of what is needed to
store the input and the output. 
Belazzougui's algorithm~\cite{Belazzougui2014b} was randomized but has been
made deterministic~\cite{Belazzougui2016,Munro2017}.
Navarro and Nekrich's algorithm~\cite{Navarro2014b} uses \(n H_0 (T) + o (n
\log \sigma)\) bits on top of the text, where \(H_k (T) \leq
\lg \sigma\) is the $k$th-order empirical entropy of $T$.  
Policriti \etal's algorithm~\cite{Policriti2015} uses \(n H_k (T) + n + O
(\sigma \log n) + o(n \log \sigma)\) bits on top of the text and runs in \(O (n
(H_k (T) + 1))\) time in the average case.  
For discussion of empirical entropy, see, e.g.,~\cite{Ohlebusch2013,Makinen2015,Navarro2016}.  
For simplicity, in this table we assume \(\sigma \in \omega(1) \cap o(n/\log
n)\).
} \label{t:related}
\resizebox{1\textwidth}{!}{
\begin{tabular}{l|c|c|c|l|l}
\hline
                        & \BWT          & \LCP & \SA & time   & additional space          \\
\hline
Belazzougui~\cite{Belazzougui2014b}  			&  $\checkmark$  & $\checkmark$	&  		& $O(n)$ & $O(n \log \sigma)$ bits \\
Navarro and Nekrich~\cite{Navarro2014b} 		&  $\checkmark$  & 	     	&  		& $O (n \log n / \log \log n)$       &      \(n H_k (S) + o (n \log \sigma)\) bits              \\
Policriti~\etal~\cite{Policriti2015}   			&  $\checkmark$	 & 	     	&  		& $O (n (H_k (S) + 1) (\log n / \log \log n)^2)$      	 & \(n H_k (S) + o(n \log \sigma)\) bits                   \\
Crochemore \etal~\cite{Crochemore2015}  		&  $\checkmark$  & 		&  		& $O(n^2)$       &  $O(1)$                  \\
Franceschini and Muthukrishnan~\cite{Franceschini2007}  &     		 & 		&  $\checkmark$	& $O(n\log n)$	       &  $O(1)$                  \\
Fischer~\cite{Fischer2011}  &     		 & $\checkmark$ 		&  $\checkmark$	& $O(n)$	       &  $O(n \log n)$  bits                  \\
Louza \etal~\cite{Louza2017}  &     		 & $\checkmark$ 		&  $\checkmark$	& $O(n\sigma)$	       &  $O(\sigma \log n)$ bits                   \\
 {\bf Our algorithm}  					&  $\checkmark$  & $\checkmark$	& 		& $O(n^2)$ & $O(1)$               \\
\hline
\end{tabular}
}
\end{table}

In this article we show how Crochemore \etal's algorithm can be extended to
compute also the longest common prefix (\LCP) array of a string $T$ of length $n$.  
Specifically, we show how, given $\BWT(T[i + 1,n-1])$ and $\LCP(T[i +1,n-1])$ 
and $T[i]$, we can compute $\BWT(T[i,n-1])$ and $\LCP(T[i,n-1])$ using $O(n - i)$
time and constant extra space on top of what is needed to store
$\BWT(T[i,n-1])$ and $\LCP(T[i,n-1])$.  
Our construction algorithm has many of the nice properties of Crochemore
\etal's original: it is conceptually simple and in-place, it allows practical
time-space tradeoffs
\footnote{We are aware that the LCP array and the BWT array can be
computed with similar worst-case bounds by using a combination of Franceschini
and Muthukrishnan's algorithm to build the \SA, then computing the \LCP naively
in $O(n^2)$ time overwriting the \SA, and finally using Crochemore \etal's
algorithm to compute the \BWT overwriting the text.  We still think our
algorithm is interesting, however, because of its simplicity --- the C
implementation fits in a single page --- and its offer of encoding and
tradeoffs.}, and we can compute some compressed encodings of
\(\LCP (T)\) directly.  This is particularly interesting because in
practice the \LCP\ array can be compressed by nearly a logarithmic
factor.
Computing the \BWT and \LCP together in small space is interesting,
for example, when building compressed suffix trees 
(see, e.g.~\cite{Sadakane2007,Ohlebusch2010b,Fischer2011b,Abeliuk2013,Gog2013a,Navarro2014a}), which are
space-efficient versions of the classic linear-space suffix tree
\cite{Weiner1973} that is often based on the \BWT and \LCP.

There exist external memory algorithms that compute the
\BWT~\cite{Ferragina2012,Bauer2013}
and the \LCP
array~\cite{Karkkainen2016,Karkkainen2016c,Karkkainen2016d,Bingmann2016}.
In particular, Bauer~\etal~\cite{Bauer2012a} and Cox~\etal~\cite{Cox2016}
showed how to construct the \BWT and the \LCP array simultaneously for string collections.
They compute the \LCP values and process the \BWT in a order similar to the one
we use for the algorithm in this article, but their solution uses auxiliary
memory and partitions the output into buckets to address external-memory access
issues.
Tischler~\cite{Tischler2016} introduced an external-memory algorithm that
computes the Elias~$\gamma$-coded~\cite{Elias1975} permuted \LCP given the \BWT
and the sampled inverse
suffix array as input.
For further discussion, we refer the reader to recent books by
Ohlebusch~\cite{Ohlebusch2013}, M\"akinen \etal~\cite{Makinen2015} and
Navarro~\cite{Navarro2016}.

The rest of the article is organized as follows.
In Section~\ref{s:background} we introduce concepts and notations.
In Section~\ref{s:crochemore} we review the in-place \BWT algorithm by
Crochemore \etal.
In Section~\ref{s:algorithm} we present our algorithm and in
Section~\ref{s:compressed} we show how the \LCP can be constructed in
compressed representation.
In Section~\ref{s:tradeoff} we provide a tradeoff between time and space
for our algorithm when additional memory is allowed.
In Section~\ref{s:conclusion} we conclude the article and we leave an
open question.

\section{Background}
\label{s:background}

Let $\Sigma$ be an ordered alphabet of $\sigma$ symbols. We denote the set
of every nonempty string of symbols in $\Sigma$ by $\sigmas$. 
We use the symbol $<$ for the lexicographic order relation between strings.
Let $\$$ be
a symbol not in $\Sigma$ that precedes every symbol in $\Sigma$. We define
$\sigmad = \{T \$ \mid T \in \sigmas\}$. 

The $i$-th symbol in a string $T$ will be denoted by $T[i]$. Let $T=T[0]
T[1]\ldots T[n-1]$ be a string of length $|T|=n$. A substring of $T$ will be
denoted by $T[i,j] = T[i] \ldots T[j]$, $0\leq i\leq j < n$. A prefix
of $T$ is a substring of the form $T[0,k]$ and a suffix is a substring of
the form $T[k,n-1]$, $0\leq k< n$.
The suffix $T[k,n-1]$ will be denoted by $T_k$.

\subsection*{Suffix array, LCP array and the BWT}

A suffix array for a string provides the lexicographic order for all its
suffixes. Formally, a suffix array \SA for a string $T\in\sigmad$ of size
$n$ is an array of integers $\SA=[i_0, i_1, \ldots, i_{n-1}]$ such that
$T_{i_0} < T_{i_1} < \ldots < T_{i_{n-1}}$~\cite{Manber1993,Gonnet1992}.

Let $\lcp(S,T)$ be the length of the longest common prefix of two strings
$S$ and $T$ in $\sigmad$. The \LCP array for $T$ stores the value of
$\lcp$ for suffixes pointed by consecutive positions of a suffix array.
We define $\LCP[0] = 0$ and $\LCP[i] = \lcp(T_{\SA[i]},T_{\SA[i-1]})$ for
$1\leq i < n$.

The \BWT of a string $T$ can be constructed by listing all the $n$ circular
shifts of $T$, lexicographically sorting them, aligning the shifts
columnwise and taking the last column~\cite{Burrows1994}. The \BWT is
reversible and tends to group identical symbols in runs. It may also be defined in
terms of the suffix array, to which it is closely related. Let the \BWT of
a string $T$ be denoted simply by \BWT. We define $\BWT[i]=T[\SA[i]-1]$ if
$\SA[i]\neq 0$ or $\BWT[i]=\$$ otherwise.

The first column of the conceptual matrix of the \BWT will be referred to as $F$,
and the last column will be referred to as $L$.
The LF-mapping property of the \BWT states that the
$i^{th}$ occurrence of a symbol $\alpha \in \Sigma$ in $L$
corresponds to the $i^{th}$ occurrence of $\alpha$ in $F$.

Some other relations between the \SA and the \BWT are the following.
It is easy to see that $L[i]= \BWT[i]$ and $F[i]=T[\SA[i]]$.
Moreover, if the first symbol of $T_{\SA[i]}$, $T[\SA[i]]=\alpha$,
is the $k^{th}$ occurrence of $\alpha$ in $F$, then
$j$ is the position of $T_{\SA[i]+1}$ in \SA (\ie $j$ is the rank of
$T_{\SA[i]+1}$) such that $L[j]$ corresponds to the $k^{th}$ occurrence of
$\alpha$ in $L$.

As an example, Figure~\ref{fig:esa} shows the circular shifts, the sorted
circular shifts, the \SA, the \LCP, the \BWT and the sorted suffixes for
$T=\mbox{BANANA\$}$.

\begin{figure}[ht]
\resizebox{\textwidth}{!} {
\centering
\begin{tabular}{c|c|c|c|c|c|l}
  \multicolumn{1}{c}{} & \multicolumn{1}{c}{~~circular shifts~~} & \multicolumn{1}{c}{~~sorted circular shifts~~} & \multicolumn{3}{c}{} & \multicolumn{1}{c}{~sorted suffixes~}\\ 
  \hline
  $i$ &                           & {$F$~~~~~~~~~~~$L$} &  ~\SA~ & ~\LCP~ & ~\BWT~ & ~$T_{\SA[i]}$    \\
  \hline
 0  & {BANANA\$}  & {\$BANANA}  & 6 & 0 & A  & ~\$        \\
 1  & {\$BANANA}  & {A\$BANAN}  & 5 & 0 & N  & ~A\$       \\
 2  & {A\$BANAN}  & {ANA\$BAN}  & 3 & 1 & N  & ~ANA\$     \\
 3  & {NA\$BANA}  & {ANANA\$B}  & 1 & 3 & B  & ~ANANA\$   \\
 4  & {ANA\$BAN}  & {BANANA\$}  & 0 & 0 & \$ & ~BANANA\$  \\
 5  & {NANA\$BA}  & {NA\$BANA}  & 4 & 0 & A  & ~NA\$      \\
 6  & {ANANA\$B}  & {NANA\$BA}  & 2 & 2 & A  & ~NANA\$    \\
  \hline
\end{tabular}
}
\caption{\SA, \LCP and \BWT for $T=\mbox{BANANA\$}$.}
\label{fig:esa}
\end{figure}

The range minimum query (\RMQ) with respect to the \LCP is the smallest \lcp
value in an interval of a suffix array. We define $\RMQ(i,j) =
\min_{i < k \leq j} \{\LCP[k]\}$, for $0 \le i <j < n$. Given a string $T$ of
length $n$ and its \LCP array, it is easy to see that
$\lcp(T_{\SA[i]},T_{\SA[j]}) = \RMQ(i,j)$.

\subsection*{Elias coding}

The Elias $\gamma$-code of a positive number $\ell \ge 1$ is composed of
the unary-code of $\lfloor \log_2 \ell \rfloor + 1$ (a sequence of $\lfloor 
\log_2 \ell \rfloor$ 0-bits ended by one 1-bit), followed by the binary
code of $\ell$ without the most significant bit \cite{Witten1999}. The
$\gamma$-code encodes $\ell$ in $2 \lfloor \log_2 \ell\rfloor + 1$ bits.
For instance, $\gamma(4) = 00100$, since the unary code for $\lfloor\log_2
4\rfloor + 1 = 3$ is $001$ and $4$ in binary is $100$.

The Elias $\delta$-code of $\ell$ is composed of the $\gamma$-code of
$1+\lfloor \log_2 \ell \rfloor$, followed by the binary code of $\ell$
without the most significant bit. The $\delta$-coding represents $\ell$
using $2 \lfloor \log_2( \lfloor \log_2 \ell \rfloor +1)\rfloor + 1 +
\lfloor \log_2 \ell \rfloor$ bits, which is asymptotically optimal
\cite{Elias1975}. For instance, $\delta(9) = 00100001$, since
$\gamma(\lfloor\log_2 9\rfloor + 1) = 00100$ and $9$ in binary is $1001$.

Decoding a $\gamma$-encoded number $\ell_{\gamma}$ requires finding the
leftmost 1-bit in the unary code of $\lfloor \log_2 \ell \rfloor + 1$, and
interpreting the next $\ell-1$ bits as a binary code. Decoding a
$\delta$-encoded number $\ell_{\delta}$ requires decoding a
$\gamma$-code and then reading the proper number of following bits as a
binary code. Both decodings may be performed in constant time in a CPU
having instructions for counting the number of leading zeros and shifting a
word by an arbitrary number of bits.

\section{In-place BWT}
\label{s:crochemore}

The algorithm by Crochemore \etal~\cite{Crochemore2015}
overwrites the input string $T$ with the \BWT as it proceeds by induction
on the suffix length.

Let $\BWT(T_s)$ be the \BWT of the suffix $T_s$, stored in $T[s,n-1]$. The
base cases are the two rightmost suffixes, for which $\BWT(T_{n-2})=T_{n-2}$
and $\BWT(T_{n-1})=T_{n-1}$. For the inductive step, the authors have shown
that the position of $\$$ in $\BWT(T_{s+1})$ is related to the rank of
$T_{s+1}$ among the suffixes $T_{s+1}$,\ldots,$T_{n-1}$ (local rank), thus
allowing for the construction of
$\BWT(T_{s})$ even after $T[s+1,n-1]$ has been overwritten
with $\BWT(T_{s+1})$. The algorithm comprises four steps.

\begin{description}
\item[1]
Find the position $p$ of \$ in $T[s+1,n-1]$. Evaluating $p-s$ gives the
local rank of $T_{s+1}$ that originally was starting at position $s+1$.

\item[2]
Find the local rank $r$ of the suffix $T_s$ using just symbol
$c = T[s]$.
To this end, sum the number of symbols in $T[s+1,n-1]$ that are strictly smaller
than $c$ with the number of occurrences of $c$ in $T[s+1,p]$ and with $s$. 

\item[3]
Store $c$ into $T[p]$, replacing \$.

\item[4]
Shift $T[s+1,r]$ one position to the left. Write \$ in $T[r]$.
\end{description}

The algorithm runs in $O(n^2)$ time using constant space memory. Furthermore,
the algorithm is also in-place since it uses $O(1)$ additional memory and
overwrites the input text with the output \BWT.

\section{LCP array in constant space}\label{s:algorithm}

Our algorithm computes both the \BWT and the \LCP array by induction on the
length of the suffix. The \BWT construction is the same as proposed by
Crochemore \etal~\cite{Crochemore2015}.
Let us first introduce an overview of our algorithm.

At a glance, the \LCP evaluation works as follows. Suppose that
$\BWT(T_{s+1})$ and the \LCP array for the suffixes
$\{T_{s+1},\ldots,T_{n-1}\}$, denoted by $\LCPsn$, have already been built.
Adding the suffix $T_s$ to the solution requires evaluating exactly two
values of \lcp, involving the two suffixes that will be adjacent to $T_s$.

The first \lcp value involves $T_s$ and the largest suffix $T_a$ in
$\{T_{s+1},\ldots,T_{n-1}\}$ that is smaller than $T_s$.
Fortunately, $\BWT(T_{s+1})$ and $\LCPsn$ are sufficient 
to compute such value.
Recall that if the first symbol of $T_a$ is not equal to the first
symbol of $T_s$ then $lcp(T_a, T_s) = 0$.
Otherwise $lcp(T_a, T_s) = lcp(T_{a+1}, T_{s+1})+1$ and the
$\RMQ$ may be used, since both $T_{a+1}$ and $T_{s+1}$ are already in $\BWT(T_{s+1})$.
We know that the position of $T_{s+1}$ is $p$ from Step 1 of the in-place \BWT
in Section \ref{s:crochemore}.
Then it is enough to find, in $\BWT(T_{s+1})$, the position of $T_{a+1}$, which
stores the symbol corresponding to the first symbol of $T_{a}$.

The second \lcp value involves $T_s$ and the smallest suffix $T_b$ in
$\{T_{s+1},\ldots,T_{n-1}\}$ that is larger than $T_s$. It may be computed in
a similar fashion.

\subsection*{Basic algorithm}

Suppose that $\BWT(T_{s+1})$ and $\LCPsn$ have already been built and are
stored in $T[s+1,n-1]$ and $\LCP[s+1,n-1]$, respectively. Adding $T_s$,
whose rank is $r$, to the solution requires updating $\LCPsn$: by first
shifting $\LCP[s+1, r]$ one position to the left and then computing the new
values of $\LCP[r]$ and $\LCP[r+1]$, which refer to the two suffixes
adjacent to $T_s$ in $\LCPs$.

The value of $\LCP[r]$ is equal to the \lcp of $T_s$ and $T_a$ in
$\BWT(T_{s+1})$. The rank of $T_a$ is $r$ and will be $r-1$ in $\BWT(T_s)$
after shifting. If the first symbol of $T_a$ is equal to $T[s]$ then
$\LCP[r] = \lcp(T_{a+1}, T_{s+1})+1$, otherwise $\LCP[r] = 0$.

We can evaluate $\lcp(T_{a+1}, T_{s+1})$ by the \RMQ
function from the position of $T_{a+1}$ to the position of
$T_{s+1}$.
We know that $p$ is the position of $T_{s+1}$ in $\BWT(T_{s+1})$.
Then we must find the position $p_{a+1}$ of $T_{a+1}$ in
$\BWT(T_{s+1})$.

Note that $T[p_{a+1}]$ corresponds to the first symbol of $T_a$. If
$T[p_{a+1}]~\ne~T[s]$ then $\lcp(T_a, T_s) = 0$, otherwise the value of
$\lcp(T_a, T_s)$ may be evaluated as $\lcp(T_{a+1}, T_{s+1})+1 = \RMQ(p_{a+1},
p)+1$.

The value of $\LCP[r+1]$ may be evaluated in a similar fashion. Let $T_b$
be the suffix with rank $r+1$ in $\BWT(T_{s+1})$ (its rank will still be
$r+1$ in $\BWT(T_s)$). We must find the position $p_{b+1}$ of
$T_{b+1}$ in $\BWT(T_{s+1})$ and then if $T[s] = T[p_{b+1}]$
compute $\LCP[r+1] = \lcp(T_{s}, T_{b}) =
\lcp(T_{s+1}, T_{b+1})+1 = \RMQ(p, p_{b+1})+1$.

The algorithm proceeds by induction on the length of the suffix. It is
easy to see that for the suffixes with length 1 and 2, the values in \LCP
will be always equal to 0. Let the current suffix be $T_s$ ($0 \le s \le
n-3$).
Our algorithm has new Steps 2', 2'' and 4', added just after Steps 2 and 4,
respectively, of the in-place \BWT algorithm as follows:

\begin{description}
\item[2'] Find the position $p_{a+1}$ of the suffix $T_{a+1}$, such that
 suffix $T_{a}$ has rank $r$ in $\BWT(T_{s+1})$, and compute:

  \[
  \ell_a  =
  \left\{
  \begin{array}{ll}
    \RMQ(p_{a+1}, p)+1 & \mbox{~if } T[p_{a+1}] = T[s]\\
    0 &                 \mbox{~otherwise.}
  \end{array}
  \right.
  \]

\item[2''] Find the position $p_{b+1}$ of the suffix $T_{b+1}$, such that
  suffix $T_{b}$ has rank $r+1$ in $\BWT(T_{s+1})$, and compute:
  \[
  \ell_b  =
  \left\{
  \begin{array}{ll}
    \RMQ(p, p_{b+1})+1 & \mbox{~if }  T[s] = T[p_{b+1}]\\
    0                 & \mbox{~otherwise.}
  \end{array}
  \right.
  \]

\item[4'] Shift $\LCP[s+1,r]$ one position to the left, store $\ell_a$ in
  $\LCP[r]$ and if $r+1<n$ then store $\ell_b$ in $\LCP[r+1]$.
\end{description}

\subsection*{Computing $\ell_a$ and $\ell_b$}

To find $p_{a+1}$ and $p_{b+1}$ and to compute $\ell_a$ and $\ell_b$ in
Steps 2' and 2'', we use the following properties.

\begin{lem}
\label{lemma:a}
Let $T_s$ be the suffix to be inserted in $\BWT(T_{s+1})$ at position $r$.
Let ${T_a \in \{T_{s+1},\ldots,T_{n-1}\}}$ be the suffix whose rank is $r$ in
$\BWT(T_{s+1})$, and let $p_{a+1}$ be the position of $T_{a+1}$.
If $p_{a+1} \notin [s+1, p)$ then $T[p_{a+1}] \ne T[s]$.
\end{lem}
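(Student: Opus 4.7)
The plan is to prove the contrapositive: assuming $T[p_{a+1}] = T[s]$, show that $p_{a+1} \in [s+1, p)$. First, I would unpack what $T_a$ represents. The formula for $r$ in Step 2 of Crochemore \etal's algorithm combines the number of characters in $T[s+1,n-1]$ smaller than $c = T[s]$ (counting suffixes in $\{T_{s+1},\ldots,T_{n-1}\}$ whose first symbol is smaller than $c$) with the number of occurrences of $c$ in $T[s+1,p]$, which by the LF-mapping enumerate those suffixes beginning with $c$ whose tails are lexicographically smaller than $T_{s+1}$, i.e.\ those suffixes beginning with $c$ that are strictly smaller than $T_s$. Hence $r - s$ equals the number of suffixes in $\{T_{s+1},\ldots,T_{n-1}\}$ strictly smaller than $T_s$, and $T_a$, having rank $r$ in $\BWT(T_{s+1})$, is the largest such suffix. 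In particular, $T_a < T_s$.

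Next, I would invoke the standard BWT property that the symbol stored at rank $p_{a+1}$ in $\BWT(T_{s+1})$ is precisely the character preceding $T_{a+1}$ in the original text, namely $T[a] = T_a[0]$. The edge case $a = n-1$ cannot arise under the hypothesis $T[p_{a+1}] = T[s]$, since then $T_a = \$$ has first character $\$ \neq T[s]$, so $T_{a+1}$ is well-defined. Thus the hypothesis $T[p_{a+1}] = T[s]$ translates to $T_a[0] = T_s[0]$.

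Finally, combining $T_a < T_s$ with $T_a[0] = T_s[0]$ and cancelling the shared first symbol yields $T_{a+1} < T_{s+1}$. Since rank in $\BWT(T_{s+1})$ respects the lexicographic order among the suffixes, and $T_{s+1}$ has rank $p$ while $T_{a+1}$ has rank $p_{a+1}$, we obtain $p_{a+1} < p$. Together with the trivial bound $p_{a+1} \ge s+1$ (as $p_{a+1}$ is an index into $T[s+1,n-1]$), this gives $p_{a+1} \in [s+1, p)$, which is the desired contrapositive. The main obstacle is not deep: it is bookkeeping. One must keep straight the paper's convention in which ``rank'' denotes the absolute array index into the overwritten $T$, so that the correspondence between the paper's ``rank'' and the usual zero-based sort rank shifts by $s+1$, and one must quickly dispatch the edge case $T_a = \$$.
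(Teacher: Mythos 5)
Your proof is correct and follows essentially the same route as the paper's: both argue the contrapositive, identify $T[p_{a+1}]$ with the first symbol of $T_a$ via the LF-mapping, and use the fact that $T_a$ is the immediate lexicographic predecessor of $T_s$ among $\{T_{s+1},\ldots,T_{n-1}\}$. The only difference is cosmetic and lies in the last step: the paper derives $p_{a+1}<p$ by observing that otherwise the occurrence of $T[s]$ at $p_{a+1}$ would not be counted in the rank $r$ (so the local rank of $T_s$ would drop below $r-s$), whereas you strip the shared first character to get $T_{a+1}<T_{s+1}$ and invoke order-preservation of positions in $\BWT(T_{s+1})$ --- the same fact expressed in lexicographic rather than counting language.
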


\begin{proof}
The local rank of $T_a$ in $\BWT(T_{s+1})$ is $r-s$.
We know that $T[p_{a+1}]$ corresponds to the first symbol of $T_a$, and it
follows from LF-mapping that the local rank of $T[p_{a+1}]$ is $r-s$ in
$\BWT(T_{s+1})$. Then $T[p_{a+1}]$ is smaller than or equal to $T[s]$, since $T_s$
also has local rank $r-s$. If $T[p_{a+1}]$ is smaller than $T[s]$, $p_{a+1}$
must be in $[s+1, n)$. However, if $T[p_{a+1}]=T[s]$ then $p_{a+1}$ must
precede the position where $T[s]$ will be inserted, \ie the position $p$ of
$T_{s+1}$, otherwise the local rank of $T_s$ would be smaller than $r-s$. Then
if $T[p_{a+1}] = T[s]$ it follows that $p_{a+1} \in [s+1, p)$.
\hfill $\qedwhite$
\end{proof}

We can use Lemma~\ref{lemma:a} to verify whether $T[p_{a+1}] = T[s]$ by simply
checking if there is a symbol in $T[s+1, p-1]$ equal to $T[s]$. If no such
symbol is found, $\ell_a = 0$, otherwise we need to compute $\RMQ(p_{a+1}, p)$.
Furthermore, if we have more than one symbol in $T[s+1, p-1]$ equal to $T[s]$,
the symbol whose local rank is $r-s$ will be the last symbol found in $T[s+1, p-1]$,
\ie the largest symbol in $T[s+1, p-1]$ smaller than $T[s]$. Then, to find
such symbol we can simply perform a backward scan in $T$ from $p-1$ to $s+1$
until we find the first occurrence of $T[p_{a+1}]=T[s]$. One can see that we
are able, simultaneously, to compute the minimum function for the \lcp visited values,
obtaining $\RMQ(p_{a+1}, p)$ as soon as we find $T[p_{a+1}]=T[s]$.

\begin{lem}\label{lemma:b}
Let $T_s$ be the suffix to be inserted in $\BWT(T_{s+1})$ at position $r$.
Let ${T_b \in \{T_{s+1},\ldots,T_{n-1}\}}$ be the suffix whose rank is $r+1$ in
$\BWT(T_{s+1})$, and let $p_{b+1}$ be the position of $T_{b+1}$. If $p_{b+1}
\notin (p, n-1]$ then $T[s]\ne T[p_{b+1}]$.
\end{lem}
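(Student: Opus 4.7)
The plan is to mirror the argument of Lemma~\ref{lemma:a} in a symmetric way. First, I would establish that $T_b$ is the immediate lexicographic successor of $T_s$ among $\{T_{s+1},\ldots,T_{n-1}\}$: since $T_s$ is inserted at position $r$ while $T_b$ stays at position $r+1$, no other suffix in that set lies between them. In particular $T_b > T_s$, so comparing first characters yields $T[p_{b+1}] \geq T[s]$.

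If $T[p_{b+1}] > T[s]$ the contrapositive holds vacuously. The content of the lemma lies in the equality case $T[p_{b+1}] = T[s] = c$, where both $T_s$ and $T_b$ begin with $c$. Here I would apply LF-mapping on $\BWT(T_{s+1})$: letting $j$ denote the number of $c$-starting suffixes in $\{T_{s+1},\ldots,T_{n-1}\}$ that are smaller than $T_s$, the suffix $T_b$ is the $(j+1)$-st smallest $c$-starting suffix in that set (because $T_b$ is the immediate $c$-starting successor of $T_s$), and by LF-mapping the occurrence of $c$ at position $p_{b+1}$ in $L$ is therefore the $(j+1)$-st occurrence of $c$ in $L$.

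To pin down where this $(j+1)$-st occurrence sits relative to $p$, I would invoke the formula for $r$ from Step~2 of Crochemore \etal's algorithm: the quantity ``number of occurrences of $c$ in $T[s+1,p]$'' used there equals precisely $j$ (the \$ at position $p$ contributes nothing). Hence the first $j$ occurrences of $c$ in $L$ lie strictly before $p$, and the $(j+1)$-st occurrence --- at $p_{b+1}$ --- must lie strictly after $p$. Together with the trivial bound $p_{b+1} \leq n-1$ this yields $p_{b+1} \in (p, n-1]$.

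The main obstacle I anticipate is the same conceptual one faced in Lemma~\ref{lemma:a}: identifying the suffix-counting quantity $j$ with the BWT-scanning quantity in Step~2 via LF-mapping, despite position $p$ currently storing \$ rather than the $c$ that would appear there in the conceptual $\BWT(T_s)$. Once this correspondence is made precise, the ordering of the $c$'s in $L$ closes the argument in one line.
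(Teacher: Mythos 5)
Your proof is correct and follows essentially the same route as the paper: the paper omits the proof of this lemma, stating only that it mirrors that of Lemma~\ref{lemma:a}, and your argument is exactly that symmetric version --- reduce to the case $T[p_{b+1}] = T[s]$ via $T_b > T_s$, then use LF-mapping together with the rank count from Step~2 (occurrences of $c$ in $T[s+1,p]$) to force $p_{b+1} > p$. If anything, you make the counting more explicit than the paper's terse ``otherwise the local rank of $T_s$ would be wrong'' phrasing, which is a reasonable expansion rather than a deviation.
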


The proof of Lemma~\ref{lemma:b} is similar to the proof of
Lemma~\ref{lemma:a} and will be omitted.
It is important to remember, though, that
$T_b$ will still
have rank $r+1$ in $\BWT(T_{s})$ (after inserting $T_s$).

The procedure to find $\ell_b$ uses Lemma~\ref{lemma:b} and computes
$\lcp(T_{s+1}, T_{b+1})$ in a similar fashion. It scans $T$
from $p+1$ to $n-1$ until it finds the first occurrence of
$T[p_{b+1}]=T[s]$, computing the minimum function to solve the $\RMQ$ if
such symbol is found.

The C source code presented in Figure \ref{fig:code} implements the algorithm
using eight integer variables apart from the $n \log_2 \sigma$ bits used to store
$T$ and compute the \BWT, and the $n \log_2 n$ bits used to compute the LCP array.
This code is also available at \url{https://github.com/felipelouza/bwt-lcp-in-place}.

\begin{figure}\label{fig:code}
\begin{lstlisting}[]
void compute_bwt_lcp(unsigned char *T, int n, int *LCP){
int i, p, r=1, s, p_a1, p_b1, l_a, l_b;
LCP[n-1] = LCP[n-2] = 0; // base cases

for (s=n-3; s>=0; s--) {

  /*steps 1 and 2*/
  p=r+1;
   for (i=s+1, r=0; T[i]!=END_MARKER; i++)
     if(T[i]<=T[s]) r++;
   for (; i<n; i++)
     if (T[i]<T[s]) r++;

   /*step 2'*/
   p_a1=p+s-1;
   l_a=LCP[p_a1+1];
   while (T[p_a1]!=T[s]) // rmq function
     if (LCP[p_a1--]<l_a)
       l_a=LCP[p_a1+1];
   if (p_a1==s) l_a=0;
   else l_a++;

   /*step 2''*/
   p_b1=p+s+1;
   l_b=LCP[p_b1];
   while (T[p_b1]!=T[s] && p_b1<n) // rmq function
     if (LCP[++p_b1]<l_b)
       l_b=LCP[p_b1];
   if (p_b1==n) l_b=0;
   else l_b++;

   /*steps 3 and 4*/
   T[p+s]=T[s];
   for (i=s; i<s+r; i++) {
     T[i]=T[i+1];
     LCP[i]=LCP[i+1];
   }
   T[s+r]=END_MARKER;

   /*step 4'*/
   LCP[s+r]=l_a;
   if (s+r+1<n)  // If r+1 is not the last position
     LCP[s+r+1]=l_b;
  }
}
\end{lstlisting}
\caption{\BWT and \LCP array construction algorithm}
\end{figure}

\subsection*{Example}

As an example, consider $T= \mbox{BANANA\$}$ and $s=1$.
Figures \ref{fig:step2} and \ref{fig:step4} illustrate Steps 2' and 4', respectively.
The values in red in columns \LCP and \BWT were still not computed.
Suppose that we have
computed $\BWT({T_2})$ and $\LCP({T_2})$. We then have $p=6$ (Step 1) and
the rank $r=4$ (Step 2).

\begin{figure}[ht]
  \centering
  \begin{tabular}{rc|c|c|l}
    & s~ & ~~\LCP~~ & ~\BWT~ & ~sorted suffixes \\
    \hhline{~----}
    & \cellcolor[gray]{0.9}0~ &  \cellcolor[gray]{0.9}\textcolor{red}{-} & \cellcolor[gray]{0.9}\textcolor{red}{B} & ~\cellcolor[gray]{0.9}BANANA\$               \\
   $s \rightarrow $ & \cellcolor[gray]{0.9}1~ &  \cellcolor[gray]{0.9}\textcolor{red}{-} & \cellcolor[gray]{0.9}\textcolor{red}{A} & ~\cellcolor[gray]{0.9}ANANA\$               \\
    \cdashline{2-5}
                          & 2~ & 0   & A   & ~\$ \\
                          & 3~ & 0   & N   & ~A\$ \\
    $r \rightarrow $      & 4~ & 1   & N   & ~ANA\$ \\
    $p_{a+1} \rightarrow $ & 5~ & 0   & A   & ~NA\$ \\
    $p \rightarrow $      & 6~ & 2   & \$  & ~NANA\$ \\
    \cline{2-5}
  \end{tabular}
  \caption{After Step 2'': $T = \mbox{BANANA\$}$ and $s=1$.}
  \label{fig:step2}
\end{figure}

Step 2' finds the first symbol equal to $T[s]$ (A) in $T[s+1,p-1]$ at
position $p_{a+1} = 5$. It represents $T_{a+1}=\mbox{NA}\$$. In this case, the
value of $\ell_a$ is calculated during the scan of $T$ from $p-1=5$ to
$s+1=2$, \ie $\ell_a = \RMQ(p_{a+1},p) = \RMQ(5,6) = 2$. Step 2'' does not
find any symbol equal to $T[s]$ (A) in $T[p+1, n-1]$.
Thus we know that $T[s]\ne T[p_{b+1}]$ and $\ell_b=0$.

\begin{figure}[ht]
  \centering
  \begin{tabular}{rc|c|c|l}
    & s~& ~\LCP~ & ~\BWT~  & ~sorted suffixes \\
    & \cellcolor[gray]{0.9}0~ &  \cellcolor[gray]{0.9}\textcolor{red}{-} & \cellcolor[gray]{0.9}\textcolor{red}{B} & ~\cellcolor[gray]{0.9}BANANA\$               \\
    \cdashline{2-5}
    &       1~ & 0 & A    & ~\$            \\
    &       2~ & 0 & N    & ~A\$   \\
    &       3~ & 1 & N    & ~ANA\$ \\
    ~~~~~ $r \rightarrow $&   \cellcolor[gray]{0.9}4~ & \cellcolor[gray]{0.9}\textcolor{black}{$\ell_a$} = 3& \cellcolor[gray]{0.9}\$  & ~\cellcolor[gray]{0.9}ANANA\$ \\
    &       \cellcolor[gray]{0.9}5~ & \cellcolor[gray]{0.9}\textcolor{black}{$\ell_b$} = 0& \cellcolor[gray]{0.9}A   & ~\cellcolor[gray]{0.9}NA\$                 \\
    &       6~ & 2 & A    & ~NANA\$        \\
    \cline{2-5}
  \end{tabular}
  \caption{After Step 4': $T = \mbox{BANANA\$}$ and $s=1$.}
  \label{fig:step4}
\end{figure}

Step 3 stores $T[s]$ (A) at position $T[p]$, $p=6$. Step 4 shifts
$T[s+1,r]$ one position to the left and inserts \$ at position
$T[r]$, $r=4$. The last step, 4', shifts $\LCP[s+1,r]$ one position to the
left and sets $\LCP[4] = \ell_a = 3$ and $\LCP[4+1]= \ell_b = 0$.

\begin{thm}
\label{thm:bwt_lcp1}
Given a string $T$ of length $n$, we can compute its $\BWT$ in-place and $\LCP$ array
simultaneously in $O(n^2)$ time using $O(1)$ additional space.
\end{thm}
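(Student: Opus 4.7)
The plan is to proceed by induction on $s$, decreasing from $s=n-1$ down to $s=0$, maintaining the invariant that at the start of iteration $s$ the subarray $T[s+1,n-1]$ holds $\BWT(T_{s+1})$ and the subarray $\LCP[s+1,n-1]$ holds $\LCP(T_{s+1})$. The base cases $s=n-1$ and $s=n-2$ are immediate since the corresponding $\LCP$ entries are zero. For the inductive step, the correctness of Steps 1--4 (the BWT portion) is inherited from Crochemore \etal, so the focus is on justifying that Steps 2', 2'' and 4' correctly transform $\LCP(T_{s+1})$ into $\LCP(T_s)$ under the same complexity budget.

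The first observation I would establish is that inserting $T_s$ at rank $r$ preserves every adjacency of the sorted suffix list apart from the two involving the new element: exactly two \lcp values change, namely those between $T_s$ and its new predecessor $T_a$ (rank $r-1$ after insertion) and between $T_s$ and its new successor $T_b$ (rank $r+1$). Consequently, shifting $\LCP[s+1,r]$ one position to the left and then overwriting only $\LCP[r]$ and $\LCP[r+1]$ with $\ell_a=\lcp(T_a,T_s)$ and $\ell_b=\lcp(T_s,T_b)$, as Step 4' does, is sufficient to produce $\LCP(T_s)$.

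The crux of the proof---and the step I expect to require the most care---is to show that $\ell_a$ and $\ell_b$ can be computed correctly using only $O(n-s)$ time and constant auxiliary space. I would argue it for $\ell_a$ as follows: if the first symbols of $T_a$ and $T_s$ differ then $\ell_a=0$, otherwise $\ell_a = 1 + \lcp(T_{a+1},T_{s+1}) = 1 + \RMQ(p_{a+1}, p)$ in $\LCP(T_{s+1})$, where $p$ is known from Step~1. Lemma~\ref{lemma:a} is the decisive ingredient: it localises $p_{a+1}$ inside the window $[s+1, p)$ whenever $T[p_{a+1}]=T[s]$. A single backward scan of $T[p-1], T[p-2], \ldots, T[s+1]$ therefore either finds $p_{a+1}$---in which case the running minimum of the $\LCP$ cells traversed yields exactly $\RMQ(p_{a+1},p)$ and needs only one extra register to maintain---or confirms that no match exists and forces $\ell_a=0$. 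A symmetric argument using Lemma~\ref{lemma:b} handles $\ell_b$ via a forward scan of $T[p+1,n-1]$. The subtlety is that on-the-fly RMQ computation is only valid because Lemma~\ref{lemma:a} rules out a relevant match outside the scanned window; verifying this interplay is the main piece of work.

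Finally I would assemble the complexity bounds. Each sub-step at iteration $s$---Steps 1, 2, 2', 2'', 3, 4, 4'---scans or shifts a contiguous region of $T$ or $\LCP$ of length at most $n-s$ and performs $O(1)$ work per cell, so iteration $s$ runs in $O(n-s)$ time; summing over $s=0,\ldots,n-3$ gives $O(n^2)$ total. The space bound is witnessed by inspection of the code in Figure~\ref{fig:code}: the only storage beyond the input/output arrays consists of the constant-size set of integer variables $s, i, p, r, p_{a+1}, p_{b+1}, \ell_a, \ell_b$, which is $O(1)$ words. This yields the claimed $O(n^2)$ time and $O(1)$ additional space.
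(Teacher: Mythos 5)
Your proposal is correct and follows essentially the same route as the paper: the paper's own proof of this theorem is just the complexity and variable-count accounting, delegating correctness to the preceding algorithm description and Lemmas~\ref{lemma:a} and~\ref{lemma:b}, which you reconstruct faithfully (the two-scan RMQ-on-the-fly argument, the parallel shift of \LCP and \BWT, and the $O(1)$ register count).
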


\begin{proof}
The cost added 
by Steps 2' and 2'' were two $O(n)$ time scans over
$T_{s+1}$ to
compute the values of $\ell_a$ and $\ell_b$, whereas Step 4' shifts the \LCP by
the same amount that \BWT is shifted. 
Therefore, the time complexity of our algorithm remains the same 
as the in-place \BWT algorithm, that is, $O(n^2)$. 
As for the space usage, our new algorithm needs only four additional variables
to store positions $p_{a+1}$ and $p_{b+1}$ and the values of $\ell_a$ and
$\ell_b$, thus using constant space only.
\hfill $\qedwhite$
\end{proof}

\section{LCP array in compressed representation}\label{s:compressed}

The \LCP array can be represented using less than $n \log n$ bits. Some
alternatives for encoding the \LCP array store its values in text
order~\cite{Sadakane2002,Fischer2010}, building an array that is known as
permuted \LCP (\PLCP)~\cite{Karkkainen2009}. Some properties of the \PLCP will
allow for encoding the whole array achieving better compression rates. However,
most applications will require the \LCP array itself, and will convert the \PLCP
to the \LCP array~\cite{Gog2013a}.
Other alternatives for encoding the \LCP will preserve its elements'
order~\cite{Abouelhoda2004,Brisaboa2013}. 

Recall that to compute the \BWT and the \LCP array in constant space
only sequential scans are performed. Therefore, the values in the \LCP
array can be easily encoded and decoded during such scans using a
universal code, such as Elias $\delta$-codes [25], with no need to further
adjust the algorithm. Our LCP array representation will encode its
values in the same order, and will be generated directly.

The algorithm will build the \BWT and a compressed \LCP array that will be
called $\LCPt$. $\LCPt$ will be treated as a sequence of bits from this
point on.

The \lcp values will be $\delta$-encoded during the algorithm such that
consecutive intervals $\LCPt[b_{i},e_{i}]$ encode
$\lcp(T_{\SA[i]},T_{\SA[i-1]})+1$.  
We add 1 to guarantee that the values are always positive integers and
can be encoded using $\delta$-codes.
We will assume that decoding subtracts this $1$ added by the encoding
operation.

Suppose that $\BWT(T_{s+1})$ and $\LCPtsn$ have already been built such
that every value in $\LCPtsn$ is $\delta$-encoded and stored in
$\LCP_2[b_{s+1},e_{n-1}]$. Adding $T_s$ to the solution requires
evaluating the values of $\ell_a$ and $\ell_b$ computed in Steps 2' and 2''
and the length of the shift to be performed in $\LCP_2[b_{s+1},e_{r-1}]$ by
Step 4'.

\subsection*{Modified Step 2'}

We know by Lemma~\ref{lemma:a} that if there is no symbol in $T[s+1,p-1]$
equal to $T[s]$, then $\ell_a = 0$, which is encoded as $\delta(0+1) =
1$. Otherwise, if $T[s]$ occurs at position $p_{a+1} \in [s+1, p)$, we
may compute $\RMQ(p_{a+1}, p)$ as the minimum value encoded in
$\LCP_2[b_{p_{a+1}+1}, e_{p}]$. We use two extra variables to store the
positions $b_{s+1}$ and $e_p$ of $\LCP_2$ corresponding to the beginning
of the encoded $\lcp(T_{\SA[s+1]},T_{\SA[s]})+1$ and the ending of the
encoded $\lcp(T_{\SA[p]},T_{\SA[p-1]})+1$. These two variables are
easily updated at each iteration.

As our algorithm performs a backward scan in $T$ to find $T[p_{a+1}]$, we
cannot compute the \RMQ function decoding the \lcp values during this scan.
Therefore, we first search for position $p_{a+1}$ scanning $T$. Then, if
$p_{a+1}$ exists, the first bit of $\LCP_2[p_{a+1}+1]$ is found by
decoding and discarding the first $p_{a+1}-s+1$ encoded values from
$b_{s+1}$. At this point $\RMQ(p_{a+1}, p)$ may be evaluated by finding
the minimum encoded value from $\LCP_2[p_{a+1}+1]$ to $\LCP_2[e_p]$. At
the end, we add $1$ to obtain $\ell_a$.

\subsection*{Modified Step 2''}

The algorithm performs a forward scan in $T$ to find the position $p_{b+1}
\in (p, n-1]$. Analogously to Modified Step 2', we know by Lemma
\ref{lemma:b} that if $T[s]$ does not occur in $T[p+1, n-1]$
then $\ell_b = 0$, which is encoded as $\delta(0+1)$. Otherwise, the \RMQ
over the encoded \lcp values may be computed during this scan. The value
of $\RMQ(p, p_{b+1})$ is computed decoding the values in $\LCP_2$ one by
one, starting at position $e_p+1$ and continuing up to position
$p_{b+1}$ in $T$. At the end, we add 1 to obtain $\ell_b$.

\subsection*{Modified Step 4'}

The amount of shift in the compressed \LCPtsn must account for the sizes of
$\delta(\ell_a+1)$, of $\delta(\ell_b+1)$ and of the encoding of the \lcp value
in position $b_r$, which represents
$\delta(\lcp(T_{\SA[r]},T_{\SA[r-1]})+1)$ and will be overwritten by
$\ell_b$. We use two auxiliary integer variables to store positions
$b_{r+1}$ and $e_{r+1}$. We compute $b_{r+1}$ and $e_{r+1}$ by scanning
$\LCP_2$ from $b_{s+1}$ up to finding $e_{r+1}$, by counting the encoding
lengths one by one. The values in $\LCP_2[b_{r+1}, e_{r+1}]$ are set to 0
and $\LCP_2[b_{s+1}, b_{r+1}-1]$ is shifted
$|\delta(\ell_a+1)|+|\delta(\ell_b+1)|-(e_{r+1}-b_{r+1})$ positions to the
left. To finish, the values of $\delta(\ell_a+1)$ and $\delta(\ell_b+1)$
are inserted into their corresponding positions $b_r$ and $b_{r+1}$ in
$\LCP_2$.

\begin{thm}
\label{thm:bwt_lcp2}
Given a string $T$ of length $n$, we can compute its $\BWT$ in-place and
$\LCP$ array compressed in $O(n \log \log n)$ bits, in the average case,
in $O(n^2)$ time using $O(1)$ additional space.
\end{thm}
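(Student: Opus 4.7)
The plan is to verify the three claims of the theorem separately: correctness of the compressed-representation algorithm, the time bound, and the space bound.

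For correctness, I would argue by induction on the suffix index $s$ (from $n-1$ down to $0$), showing that after the $s$-th iteration $\LCPt$ holds the $\delta$-encoded sequence of values $\lcp(T_{\SA[i]},T_{\SA[i-1]})+1$ for the current suffixes of $T$, in the same order as the basic algorithm produces them. The base cases ($s=n-1,n-2$) give trivial zeros, which are encoded as $\delta(1)$. For the inductive step I would appeal directly to Lemmas~\ref{lemma:a} and~\ref{lemma:b}, noting that the Modified Steps 2' and 2'' compute the same $\ell_a$ and $\ell_b$ as the uncompressed version: Modified~Step~2' scans $T$ backward from $p-1$ to $s+1$ to locate $p_{a+1}$ (this is unchanged), then decodes $\LCPt[b_{p_{a+1}+1},e_p]$ to obtain $\RMQ(p_{a+1},p)$; Modified~Step~2'' does the symmetric forward work but interleaves the scan of $T$ with on-the-fly decoding of $\LCPt$. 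Modified~Step~4' then shifts the prefix of $\LCPt$ by the exact number of bits required to replace the encoding of the old $\lcp(T_{\SA[r]},T_{\SA[r-1]})+1$ with the encodings of $\ell_a+1$ and $\ell_b+1$, so the invariant is preserved.

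For the time bound, I would observe that Steps 2' and 2'' each perform one scan of length $O(n-s)$ over $T$, plus at most $O(n-s)$ constant-time $\delta$-decodings (using the word-level leading-zero and shift instructions already noted when Elias coding was introduced). The shift of $\LCPt[b_{s+1},b_{r+1}-1]$ in Modified~Step~4' moves $O(n\log\log n)$ bits in the average case, which is $O((n-s)\log\log n / \log n) \cdot \log n = O((n-s)\log\log n)$ word-level operations, which fits inside the $O(n-s)$ asymptotic cost per iteration once the per-iteration work is dominated by the linear scans of $T$. Summing over the $n$ iterations yields $O(n^2)$, matching the uncompressed algorithm.

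For the space bound, the critical ingredient is the average-case length of a $\delta$-code over the LCP values. On a random string over an alphabet of size $\sigma$ (or, more generally, whenever one assumes the expected LCP length is $O(\log_\sigma n)$), each LCP value $\ell$ satisfies $\mathbb{E}[\ell]=O(\log n)$, so $\mathbb{E}[|\delta(\ell+1)|] = O(\log\log n)$ by the $\delta$-code length formula stated earlier in the paper. Summing over the $n$ entries gives a total of $O(n\log\log n)$ bits for $\LCPt$. The remaining state is a constant number of integer variables (the auxiliary $b_{s+1}, e_p, b_{r+1}, e_{r+1}$, plus the variables already used by Theorem~\ref{thm:bwt_lcp1}), all of $\Theta(\log n)$ bits, hence $O(1)$ additional space. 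The main obstacle I anticipate is justifying that Modified~Step~4' does not blow up the per-iteration time despite needing to locate $b_{r+1}$ and $e_{r+1}$ by scanning and decoding; this forces the argument to treat the scan of $\LCPt$ and the scan of $T$ together, and to invoke the average-case bit-length so that the shift over $\LCPt$ stays within $O(n-s)$ word operations per iteration.
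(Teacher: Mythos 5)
Your overall route is the same as the paper's: constant-time Elias encoding and decoding grafted onto the scans of Theorem~\ref{thm:bwt_lcp1}, the average-case bound that each \LCP entry is $O(\log n)$ for random texts so that each $\delta$-code takes $O(\log\log n)$ bits and the whole of \LCPt takes $O(n\log\log n)$ bits, and a constant number of extra index variables ($b_{s+1}$, $e_p$, $b_{r+1}$, $e_{r+1}$). The correctness induction you add is sound and goes beyond what the paper writes down explicitly.

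The gap is in your time analysis of Modified Step~4'. You count the shift as ``$O((n-s)\log\log n/\log n)\cdot\log n = O((n-s)\log\log n)$ word-level operations'' and then assert that this ``fits inside the $O(n-s)$ asymptotic cost per iteration.'' It does not: $O((n-s)\log\log n)$ word operations per iteration would sum to $O(n^2\log\log n)$, not $O(n^2)$, and the expression itself conflates bits with words --- moving $B$ bits word-by-word costs $O(B/\log n)$ word operations, so you should divide by the word size rather than multiply back by it. Moreover, by invoking the average-case code length to control the cost of the shift you would obtain the $O(n^2)$ time bound only in the average case, whereas the theorem's ``average case'' qualifier attaches solely to the $O(n\log\log n)$ space bound; the paper claims, and gets, worst-case $O(n^2)$ time. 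The correct and simpler observation --- the one the paper makes by noting that the left-shift ``is done word-size'' --- is unconditional: every value encoded in \LCPt is at most $n$, so each $\delta$-code is $O(\log n)$ bits and the entire encoded array occupies $O(n\log n)$ bits, i.e.\ $O(n)$ machine words, even in the worst case; a word-by-word shift therefore costs $O(n)$ operations per iteration and $O(n^2)$ overall, with no appeal to the average case.
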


\begin{proof}
The cost added by the modifications in Steps 2', 2'' and 4' is constant
since the encoding and decoding
operations are performed in $O(1)$ time and the left-shifting of the encoded
\lcp values in Step 4' is done word-size.
Therefore, the worst-case time complexity of the modified algorithm remains $O(n^2)$.
As for the space usage, the expected value of each \LCP array entry is $O(\log
n)$ for random texts~\cite{Fayolle2005} and for more specific domains, such as
genome sequences and natural language, this limit has been shown
empirically~\cite{Leonard2012}. 
Therefore, in the average case our \LCP array representation uses $O(n \log
\log n)$ bits, since we are using Elias $\delta$-coding~\cite{Elias1975}.
In the worst case, when the text has only the same symbols, the \LCP array
still requires $n \log n$ bits since $\sum_{i=0}^{n-1} \log(i) = \log (n!) =
\Theta(n \log n)$.  \hfill $\qedwhite$
\end{proof}

\section{Tradeoff}\label{s:tradeoff}

Crochemore \etal\ showed how, given \(k \leq n\), we can modify their algorithm
to run in \(O ((n^2 / k + n) \log k)\) time using \(O(k \sigma_k)\) space,
where $\sigma_k$ is the maximum number of distinct characters in a substring of
length $k$ of the text. 
The key idea is to insert characters from the text into the $\BWT$ in batches
of size $k$, thereby using \(O(1)\) scans over the $\BWT$ for each batch,
instead of for each character.  
Their algorithm can be modified further to output, for each batch of $k$
characters, a list of the $k$ positions where those characters should be
inserted into the current $\BWT$, and the position where the \$ should be
afterward~\cite{Juha}.  (This modification has not yet been implemented, so
neither has the tradeoff we describe below.)

From the list for a batch, with \(O(1)\) passes over the current $\BWT$ using
\(O(k \sigma_k)\) additional space, we can compute in \(O((n + k) \log k)\) time the
intervals in the current $\LCP$ array on which we should perform $\RMQ$s when
inserting that batch of characters and updating the $\LCP$ array, and with
\(O(1)\) more passes in \(O(n)\) time using \(O(k)\) additional space, we can perform
those $\RMQ$s.  
The only complication is that we may update the $\LCP$ array in the middle of
one of those intervals, possibly reducing the result of future $\RMQ$s on it.  
This is easy to handle with \(O(k)\) more additional space, however, and does not
change our bounds.  
Analogous to Crochemore \etal's tradeoff, therefore, we have the following
theorem:

\begin{thm}
\label{thm:tradeoff}
Given a string $T$ of length $n$ and \(k \leq n\), we can compute its $\BWT$ in-place 
and $\LCP$ array simultaneously in \(O((n^2 / k + n) \log k)\)
time using \(O(k \sigma_k)\) additional space, where $\sigma_k$ is the maximum number of
distinct characters in a substring of length $k$ of the text. 
\end{thm}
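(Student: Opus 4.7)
The plan is to lift the single-character induction used in Theorem~\ref{thm:bwt_lcp1} to Crochemore \etal's batched scheme, so that the $\LCP$ updates piggyback on the $\BWT$ updates for each block of $k$ suffixes. I would take Crochemore \etal's modified algorithm as a black box: on each outer iteration it consumes the next batch of $k$ text characters and, with $O(1)$ scans over the current $\BWT$ using $O(k \sigma_k)$ additional space, it sorts the batch and produces (i) the $k$ positions where those characters must be inserted into the current $\BWT$, (ii) the position of the current \$, and (iii) the updated $\BWT$ for the batch. Over the whole text this contributes $O((n^2/k + n)\log k)$ time and $O(k\sigma_k)$ space, matching Crochemore \etal's tradeoff bound.

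For the $\LCP$ component, the key observation is that Lemmas~\ref{lemma:a} and~\ref{lemma:b} still describe \emph{per-insertion} what must be computed: each inserted suffix $T_s$ creates exactly two new $\LCP$ entries, $\ell_a$ and $\ell_b$, each of which is either $0$ or an $\RMQ$ over an interval of the current $\LCP$ array bounded by the position of $T_{s+1}$ and the position of the relevant $T_{a+1}$ or $T_{b+1}$. Given the batch's sorted list of insertion positions from step (i), I would first, in $O(1)$ passes over the $\BWT$ with $O(k\sigma_k)$ space, locate for each of the $2k$ new entries the corresponding position $p_{a+1}$ or $p_{b+1}$ (analogous to the backward/forward scans in Steps 2' and 2'' but done in parallel for the batch); sorting and bucketing the $2k$ endpoints by character takes $O((n+k)\log k)$ time. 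Then, in $O(1)$ more linear passes over $\LCP$ using $O(k)$ space, I would compute all $2k$ interval minima by a standard sweep that maintains a running minimum per active interval.

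The main obstacle is the interaction between insertions inside the same batch: once we write $\ell_a$ or $\ell_b$ at position $r$ during the batch, the stored $\LCP$ values no longer equal $\LCPs$, and a later insertion in the same batch whose $\RMQ$ interval straddles $r$ could read a stale or newly-overwritten entry. To handle this, I would keep, in $O(k)$ additional space, a small patch list recording the ``true'' values that have already been overwritten within this batch, and during the sweep for $\RMQ$s I would treat the affected interval endpoints as if the old values were still in place, updating the running minimum accordingly. Since a batch touches at most $O(k)$ $\LCP$ positions, the patch list never exceeds $O(k)$ entries and can be consulted in amortized constant time per scan step.

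Combining these pieces, each batch costs $O((n + k)\log k)$ time and $O(k\sigma_k)$ additional space, and there are $\lceil n/k\rceil$ batches; the shifting of $\LCP[s+1,r]$ is absorbed into the same $O(n)$ passes used to shift the $\BWT$, so the total time is $O((n^2/k + n)\log k)$ and the additional space remains $O(k\sigma_k)$, yielding Theorem~\ref{thm:tradeoff}. The in-place property is preserved because we still overwrite the text with the $\BWT$ and overwrite the $\LCP$ array region with the updated $\LCP$; only $O(k\sigma_k)$ working memory is used outside these two arrays.
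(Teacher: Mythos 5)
Your proposal matches the paper's argument essentially step for step: it takes Crochemore \etal's batched variant as a black box emitting the list of insertion positions and the \$ position, uses $O(1)$ passes over the $\BWT$ with $O(k\sigma_k)$ space to locate the $\RMQ$ intervals in $O((n+k)\log k)$ time, performs the $\RMQ$s with $O(1)$ further passes in $O(n)$ time and $O(k)$ space, and handles intra-batch interference between new $\LCP$ entries and pending $\RMQ$ intervals with $O(k)$ extra bookkeeping. The paper's own justification is exactly this sketch (it is no more precise than you are about the interference issue), so your proposal is correct at the same level of detail and takes the same approach.
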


\section{Conclusion}\label{s:conclusion}

We have shown how to compute the \LCP array together with the \BWT using constant space.
Like its predecessor, our algorithm is quite simple and it builds on
interesting properties of the \BWT and of the \LCP array. 
Moreover, we show how to compute the \LCP array directly in compressed
representation with no asymptotic slowdown using Elias coding,
and we provide a time/space tradeoff for our algorithm when additional memory is
allowed.
We note that our algorithm can easily construct the suffix array using constant
space, with no overhead on the running time.
We also note that very recently there has been exciting work on obtaining
better bounds via randomization~\cite{Nicola}.

We leave as an open question whether our algorithm can be modified to compute
simultaneously the \BWT and the permuted \LCP in compressed form, which takes
only 2n + $o(n)$ bits, while using quadratic or better time and only $O(n)$ bits on
top of the space that initially holds the string and eventually holds the \BWT.

\section*{Acknowledgments}
We thank the anonymous reviewers and Meg Gagie for comments that improved the
presentation of the manuscript.  
We thank Djamal Belazzougui, Juha K\"arkk\"ainen, Gonzalo Navarro, Yakov
Nekrich and Nicola Prezza for helpful discussions.

Funding:
F.A.L. acknowledges the financial support of CAPES and CNPq [grant number 162338/2015-5].
T.G. acknowledges the support of the Academy of Finland and Basal Funds FB0001, Conicyt, Chile.
G.P.T. acknowledges the support of CNPq.


\end{document}